
%
%
\documentclass[showpacs, showkeys,twocolumn,preprintnumbers,floatfix,amsmath,amssymb]{revtex4}
\usepackage{booktabs}
\usepackage{natbib}
\usepackage{amsmath,amsthm}
\usepackage{float}
\usepackage{graphicx}
\usepackage{epstopdf}
\usepackage{caption}
\usepackage{subfigure}
\usepackage{tabularx}
\usepackage{makecell}
\usepackage{threeparttable}
\usepackage{dcolumn}
\usepackage{bm}
\usepackage{color,epsfig,multirow}
\usepackage{array}
\usepackage{hyperref}
\usepackage{algorithm}
\usepackage{algorithmic}
\usepackage{threeparttable}

\numberwithin{equation}{section}

\newcommand{\cO}{\mathcal{O}}

\newtheorem{theorem}{Theorem}[section]

\theoremstyle{definition}

\theoremstyle{remark}

\allowdisplaybreaks[4]

\begin{document}
\preprint{Preprint}

\title{Variance-reduced random batch Langevin dynamics}
\author{Zhenli Xu$^{1,2}$}\thanks{xuzl@sjtu.edu.cn}
\author{Yue Zhao$^3$}\thanks{zhaoyu14@msu.edu}
\author{Qi Zhou$^{1}$}\thanks{zhouqi1729@sjtu.edu.cn}
\affiliation{$^1$School of Mathematical Sciences, Shanghai Jiao Tong University, Shanghai 200240, China\\
$^2$CMA-Shanghai, MOE-LSC and Shanghai Center for Applied Mathematics, Shanghai Jiao Tong University, Shanghai 200240, China \\ $^3$Department of Computational Mathematics,
Science \& Engineering, Michigan State University, East Lansing,
MI 48824, USA}
\date{\today}
\begin{abstract}
The random batch method is advantageous in accelerating force calculations in particle simulations, but it poses a challenge of removing the artificial heating effect in application to the Langevin dynamics.
We develop an approach to solve this issue by estimating the force variance, resulting in a variance-reduced random batch Langevin dynamics.
Theoretical analysis shows the high-order local truncation error of the time step in the numerical discretization scheme, in consistent with the fluctuation-dissipation theorem.
Numerical results indicate that the method can achieve a significant variance reduction since a smaller batch size provides accurate approximation, demonstrating the attractive feature of the variance-reduced random batch method for Langevin dynamics. 
\end{abstract}



\pacs{02.70.Ns, 87.15.Aa}
\keywords{Langevin dynamics, random batch list method, variance reduction.}

\maketitle

\section{Introduction}

Langevin dynamics simulation \cite{frenkel2001understanding} is a highly influential computational technique utilized in various scientific disciplines, including physics, chemistry, biology, and material science, specifically at the nano- and micro-scales \cite{mabey2017strong,Trullas1989Langevin,Lee2017computational,Pastor1994}. Its primary purpose is to investigate the equilibrium properties and dynamics of particles in many-body systems.
As a thermostat in molecular dynamics simulation, this computational approach simulates canonical ensembles by integrating Newton's equation of motion combined with thermal fluctuations and frictions \cite{welling2011bayesian,ma2015complete,roberts1996,dalalyan2017,oksendal03}, and its widespread applications have been found in the analysis of irreversible dynamics, stochastic properties, and trajectory instability of various systems \cite{norman2013stochastic}. Furthermore, in cutting-edge fields such as machine learning, Langevin dynamics and its improved algorithms can efficiently transform stochastic optimization into posterior sampling, demonstrating strong potential for a wide range of learning tasks \cite{welling2011bayesian,ma2015complete,dalalyan2017,dubey2016variance,coeurdoux2024normalizing}.

The computational bottleneck for particle simulations using Langevin dynamics is non-bonded interactions between particles. The recently developed random batch method (RBM) \cite{jin2018random} shows promise in the fast calculation of interacting particle systems. This method is inspired by the mini-batch idea in stochastic gradient descent algorithms of machine learning \cite{robbins1951stochastic,bottou1998online,bubeck2015convex}, and approximates the force by using a batch of randomly selected neighbors of a fixed batch size such that the approximate force is unbiased, achieving much success \cite{CiCP-28-1907,li2020Random,jin2022mean,li2020some,fornasier2022anisotropic,ko2021model,golse2019random,ye2024error,guillin2024some}. Particularly, in molecular dynamics, the RBM has been extended to obtain the random batch Ewald method and the random batch sum-of-Gaussians method, which solve
the parallel scalability for long-range interactions  \cite{JinLiXuZhao2020,liang2022random,liang2023random,gan2024random}. For short-range interactions, the singular kernel can be treated by the random batch list (RBL) method \cite{liang2021randomRBL,liang2022IRBE,lin2024hybrid}.
The RBL divides the interaction region into a core-shell structure and applies the RBM in the shell region, significantly reducing the neighboring number of particle interactions. The RBM is implemented for molecular dynamics at the graphic processing units and allows the simulation of a system of ten million particles in a single card \cite{gao2024rbmd}.

The randomness introduced into the deterministic drifting term results in an additional error of $\cO(\tau)$ at each discrete time step \cite{liang2021randomRBL}, leading to the failure of the fluctuation-dissipation theorem \cite{kubo1966fluctuation}. This can be well handled by introducing the Nos\'e-Hoover thermostat. However, under the Langevin thermostat, this stochastic approach can lead to an artificial heating effect.
In the simulation of heterogeneous systems such as the gas-liquid coexistence, it often requires a larger batch size or decreasing time step size to maintain the equilibrium distribution of the system, 
reducing the efficiency of the RBM \cite{jin2022RBMSOI}.
The artificial heating effect raises the kinetic energy of particles to overcome the intermolecular forces, which may lead to unphysical outcomes in critical issues such as phase transition simulations \cite{guillin2024some}.

In this paper, we present the variance-reduced random batch method for Langevin dynamics. The covariance matrix of the random forces in the drifting term is approximated and used to correct the Brownian motion, achieving a significant variance reduction and leading to the consistency of the single-step fluctuation-dissipation theorem. Theoretically, we demonstrate that the variance reduction method has higher discrete numerical accuracy in the statistical sense. The method is implemented into the Lennard-Jones fluid systems and Coulomb systems, and numerical results demonstrate the attractive performance with a significant reduction in the force variance.
Notably, our algorithm exhibits linear computational complexity and effectively preserves the non-uniform properties observed in strongly correlated systems, thereby holding significant potential for large-scale simulations of heterogeneous systems.

\section{Method}\label{sec:VCRBM}
Consider a system of $N$ particles at position $\bm{r}_i$ for $i=1,2,\cdots, N$ within a simulation box $\Omega$ with periodic boundary conditions.
Suppose that the system is immersed in an infinite isothermal bath, and the particles obey the second-order Langevin dynamics,
\begin{equation}
\label{eq:OrigLangv}
    m_i\mathrm{d}^2\bm{r}_i=\bm{F}_i\mathrm{d}t-\bm{\Gamma} \mathrm{d}\bm{r}_i + \bm{D}\mathrm{d}\bm{W}_i
\end{equation}
where $\bm{r}_i$ is the particle position, $\bm{\Gamma}$ is the damping matrix, $\bm{D}$ is the diffusion matrix, and $\{\bm{W}_{i}\}$ is i.i.d. Brownian motion.
The friction term $-\bm{\Gamma} \mathrm{d}\bm{r}_i$ and fluctuation force $\bm{D}\mathrm{d}\bm{W}_i$ satisfy the fluctuation-dissipation theorem \cite{kubo1966fluctuation,marconi2008fluctuation,prost2009generalized}
\begin{equation}
    2k_BT\bm{\Gamma}=\bm{D}^2,
\end{equation}
with $k_B T$ being the thermal energy. Introducing the particle velocity $\bm{v}_i$ with $\mathrm{d}\bm{r}_i=\bm{v}_i\mathrm{d}t$, the classical Langevin dynamics Eq. \eqref{eq:OrigLangv} is discretized by the Euler-Maruyama scheme \cite{kloeden1992stochastic} as follows,
\begin{equation}
\label{eq::CL_EM}
    \left\{
    \begin{array}{l}
    \Delta \bm{r}_i^{n}=\bm{v}_i^{n}\tau\\
    m_i \Delta\bm{v}_i^{n}=\bm{F}_i^{n}\tau-\bm{\Gamma} \bm{v}_i^{n}\tau+\bm{\xi}_i^{n},
    \end{array}
    \right.
\end{equation}
where $\Delta$ is the forward difference operator, $\tau$ is the time step, the superscript $n$ denotes the simulation step at $t=n\tau$, and $\bm{\xi}_i^{n}\sim \mathcal{N}(0,\bm{D}^{2}\tau)$ denotes the discrete Brownian motion.

The total force acting on particle $i$ in Eq. \eqref{eq::CL_EM} is composed of the external force and the interparticle forces,  
\begin{equation}
\bm{F}_i=\bm{F}_i^{\mathrm{ext}}+\sum  \bm{f}_{ij},
\end{equation}
where one has $\bm{f}_{ij}=-\nabla_{\bm{r}_i}U_{ij}$, and the summation runs over all particles in the central and image boxes. Typically, one considers the LJ potential 
\begin{equation}\label{eq:ULJCoul} 
U_{ij}=4\epsilon_{\text{LJ}}\left[\left(\dfrac{\sigma_{\text{LJ}}}{r_{ij}}\right)^{12}-\left(\dfrac{\sigma_{\text{LJ}}}{r_{ij}}\right)^{6}\right], 
\end{equation}
where $\epsilon_{\text{LJ}}$ and $\sigma_{\text{LJ}}$ are the strength and the interaction distance scale.
The classical Langevin dynamics introduces a cutoff radius $r_s$, and particles beyond this distance are ignored in the force calculation. This results in a linear computational complexity by the linked cell list algorithm \cite{yao2004improved,meloni2007efficient,brown2011implementing,welling2011efficiency}, but for simulating the state of the gas-liquid coexistence, a large $r_s$ is required for accurate simulations.
The RBL \cite{liang2021randomRBL} has been proposed to reduce the number of interacting neighbors as well as the computational cost, by introducing the second cutoff radius $r_c$, and dividing the interaction region into a core region $r<r_c$, and a shell region $r_c\leq r<r_s$.
The force is directly calculated inside the core region, and the scaled force is computed in the shell zone by randomly selecting a batch of $P$ particles with uniform probability.
Denote $N_i$ and $C_i$ as the particle number in the shell region and the set of the $P$ selected particles for the $i$-th particle.
The total force acting on this particle is given by
\begin{equation}\label{eq:randomF}
\widetilde{\bm{F}}_i=\bm{F}_i^{\mathrm{ext}}+\sum_{r_{ij}<r_{\mathrm{c}}} \bm{f}_{ij}+\frac{N_i}{P}\sum_{j\in C_i} \bm{f}_{ij},
\end{equation}
and the average force is subtracted from each particle to conserve the total momentum.

In the RBL-based Langevin dynamics, the second equation of Eq. \eqref{eq::CL_EM} reads
\begin{equation}
\label{eq::RBL_EM} 
    m_i \Delta\bm{v}_i^{n}=\widetilde{\bm{F}}_i^{n}\tau-\bm{\Gamma} \bm{v}_i^{n}\tau+\bm{\xi}_i^{n}, 
\end{equation}
where the approximation Eq. \eqref{eq:randomF} is used as the total force.
It can be proved that the approximate force $\widetilde{\bm{F}}_i$ is unbiased with a bounded force variance and the convergent distribution is close to the stationary one \cite{liang2021randomRBL}.
The acceleration ratio can be approximated by the ratio of neighbor list size, as $[(4\pi/3) \rho r_\mathrm{s}^3] / [(4\pi/3) \rho r_\mathrm{c}^3 + P]$, where $\rho$ is the particle density.


Under the Euler-Maruyama scheme, the mean-field limit of the RBM results in an effective dynamics \cite{inass2023quantifying,guillin2024some}, where
the random force introduces an additional contribution into the Brownian motion, such that $\bm{Z}_t\sim \mathcal{N}\left(0, \bm{D}^{2}\tau+ \bm{\Sigma}\tau^{2}\right)$. Here, $\bm{\Sigma}$ is the force covariance matrix.
Therefore, it is necessary to reduce the force variance to maintain the fluctuation-dissipation theorem. 
One can approximately calculate the covariance matrix of the random force by
\begin{equation}\label{eq:varF}
\begin{array}{c}
\textbf{var}(\widetilde{\bm{F}}_i) \approx \dfrac{N_i^2}{P^2} \left( \sum\limits_{j\in C_P} \bm{f}_{ij}\bm{f}_{ij}^{\mathrm{T}} -P\cdot\overline{\bm{F}}_{i}\overline{\bm{F}}_{i}^{\mathrm{T}} \right),\\
\overline{\bm{F}}_{i}:=\dfrac{1}{P}\sum\limits_{j\in C_P}\bm{f}_{ij}.
\end{array}
\end{equation}
Clearly, it is bounded due to the core-shell structure of the neighbor list and all particles in $C_i$ satisfying $r_{ij}\geq r_{c}$.
One subsequently subtracts the variance term from the Brownian motion in discrete Langevin dynamics with time step $\tau$ such that $\bm{\xi}_i^{n}$ in Eq. \eqref{eq::RBL_EM} is replaced by $\widetilde{\bm{\xi}}_i^{n}$ with
\begin{equation}\label{eq::VRRBL}  
\widetilde{\bm{\xi}}_i^{n} \sim \mathcal{N}\left(0, \bm{D}^{2}\tau -\textbf{var}(\widetilde{\bm{F}}_i^n)\tau^{2}\right), 
\end{equation}
where $\widetilde{\bm{F}}_i^n$ is calculated by \eqref{eq:randomF} at the $n$-th time step.
The process of the Langevin dynamics evolved with the Brownian motion \eqref{eq::VRRBL} is presented in Algorithm \ref{al::VR-RBL}.
\begin{algorithm}[H]
	\caption{Varaince-reduced random batch Langevin dynamics}\label{al::VR-RBL} 
	\begin{algorithmic}[1]
		\STATE Choose $r_s$ (the cutoff radius), $r_c$ (the core radius), $\tau$ (time step), $N_{t}$ (total simulation steps), and $P$ (batch size).  Initialize positions, velocities and charges of all particles.
		\FOR {$n \text{ in } 1: N_{t}$}
		\STATE \quad Create the cell lists.
            \STATE \quad For each particle $i$, select $P$ particles randomly into $C_i$ from its shell zone if $N_i>P$, otherwise select all particles. 
            \STATE \quad  For each particle $i$, calculate the selected pair interactions $\bm{f}_{ij}$ for all $r_{ij}<r_c$ and $j\in C_P$, $j\neq i$, then calculate stochastic force $\widetilde{\bm{F}}_i$ by \eqref{eq:randomF} and covariance matrix $\textbf{var}(\widetilde{\bm{F}}_i)$ by \eqref{eq:varF}.
            \STATE \quad  Integrate the Langevin dynamics \eqref{eq::RBL_EM} with the Brownian motion \eqref{eq::VRRBL}.
		\ENDFOR
	\end{algorithmic}
\end{algorithm}

Considering the estimate of the local truncation error, one omits the superscript $n$ for simplicity.
    The velocity evolution of a single step in the RBL method is provided by \eqref{eq::RBL_EM}, where $\widetilde{\bm{F}}_i$ is the unbiased estimation of $\bm{F}_i$, hence it can be approximately regarded as $\widetilde{\bm{F}}_i = \bm{F}_i +\bm{\eta}_i$ with $\bm{\eta}_i\sim \mathcal{N}(0,\bm{\mathcal{V}})$ and $\bm{\mathcal{V}}$ denoting the covariance matrix of $\widetilde{\bm{F}}_i$. 
    The discrete scheme can be approximated as
    \begin{equation}
    \label{eq::RBL_LE}
    m_i \Delta \bm{v}_i=(\bm{F}_i-\bm{\Gamma} \bm{v}_i)\tau+ \hat{\bm{\xi}_i},
    \end{equation}
    where $\hat{\bm{\xi}_i}= \bm{\xi}_i + \bm{\eta}_i  \tau + \cO(\tau^{3/2})$ denotes the noise term. Hence the local truncation error of the RBL discretization is $\cO(\tau)$, which leads to the inconsistency with the fluctuation-dissipation theorem. As for the proposed dynamics with variance reduction technique, Theorem~\ref{thm::VR} demonstrates that it reduces the local truncation error to $\cO(\tau^{3/2})$.

\begin{theorem}
\label{thm::VR}
Given the force variance, the Langevin dynamics evolved with the random force \eqref{eq:varF} and the Brownian motion \eqref{eq::VRRBL} exhibits a local truncation error of $\cO(\tau^{3/2})$ in the sense of statistical expectation.
\end{theorem}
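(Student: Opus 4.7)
The plan is to compare the single-step velocity increment of the VR-RBL scheme with that of the exact Langevin integrator at the level of moments; the key statistical identity to exploit is that the corrective variance $\mathbf{var}(\widetilde{\bm{F}}_i)\tau^{2}$ subtracted in \eqref{eq::VRRBL} is exactly the contribution that the random-batch force makes to the noise covariance. Concretely, I would first isolate the effective stochastic part of \eqref{eq::RBL_EM} after $\bm{\xi}_i$ is replaced by $\widetilde{\bm{\xi}}_i$. Writing $\bm{\eta}_i:=\widetilde{\bm{F}}_i-\bm{F}_i$, the unbiasedness of the random-batch force together with \eqref{eq:varF} gives
\begin{equation*}
\E[\bm{\eta}_i]=0,\qquad \E[\bm{\eta}_i\bm{\eta}_i^{\mathrm{T}}]=\mathbf{var}(\widetilde{\bm{F}}_i),
\end{equation*}
while $\widetilde{\bm{\xi}}_i$ in \eqref{eq::VRRBL} is independent of the batch. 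Hence the full stochastic increment in $m_i\Delta\bm{v}_i$ is $\bm{\zeta}_i:=\bm{\eta}_i\tau+\widetilde{\bm{\xi}}_i$, to be compared with the exact $\bm{\xi}_i\sim\mathcal{N}(0,\bm{D}^{2}\tau)$.

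Next, I would verify the moment match. Clearly $\E[\bm{\zeta}_i]=0$, and by independence of $\bm{\eta}_i$ and $\widetilde{\bm{\xi}}_i$,
\begin{equation*}
\E\!\left[\bm{\zeta}_i\bm{\zeta}_i^{\mathrm{T}}\right]=\mathbf{var}(\widetilde{\bm{F}}_i)\,\tau^{2}+\bm{D}^{2}\tau-\mathbf{var}(\widetilde{\bm{F}}_i)\,\tau^{2}=\bm{D}^{2}\tau,
\end{equation*}
which coincides with the covariance of $\bm{\xi}_i$. Combined with the unbiasedness of the drift, this eliminates the excess-variance term $\bm{\mathcal{V}}\tau^{2}$ that produces the $\cO(\tau)$ error in the bare RBL expansion $\widehat{\bm{\xi}}_i=\bm{\xi}_i+\bm{\eta}_i\tau+\cO(\tau^{3/2})$ and restores the single-step fluctuation-dissipation identity $\E[\bm{\zeta}_i\bm{\zeta}_i^{\mathrm{T}}]=2k_BT\bm{\Gamma}\tau$.

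Third, I would quantify the residual mismatch. Odd-order moments of $\bm{\xi}_i$ vanish, while for $\bm{\zeta}_i$ the Gaussianity and independence of $\widetilde{\bm{\xi}}_i$ collapse every third-order moment to $\tau^{3}\E[\bm{\eta}_{i,\alpha}\bm{\eta}_{i,\beta}\bm{\eta}_{i,\gamma}]=\cO(\tau^{3})$, where boundedness of $\bm{\eta}_i$ is inherited from the shell condition $r_{ij}\geq r_c$ and the smoothness of $\bm{f}_{ij}$ on the shell; an analogous count yields $\cO(\tau^{4})$ at fourth order. Substituting these bounds into a Taylor expansion of any smooth observable $\phi$ about the deterministic predictor produces a single-step weak error of $\cO(\tau^{3/2})$, strictly better than the $\cO(\tau)$ baseline of \eqref{eq::RBL_LE}.

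The hard part will be making the last step fully rigorous: promoting an exact match of the first two moments plus higher-order moment bounds into the announced $\cO(\tau^{3/2})$ weak-error rate requires a careful Kloeden-Platen-type accounting, with conditioning on $(\bm{r}^{n},\bm{v}^{n})$ and on the random batch selection. A further subtlety is that Algorithm~\ref{al::VR-RBL} uses the empirical estimator \eqref{eq:varF} in place of the exact $\bm{\mathcal{V}}$; the theorem's hypothesis ``given the force variance'' lets one treat $\mathbf{var}(\widetilde{\bm{F}}_i)$ as exact and defer the study of this estimator error.
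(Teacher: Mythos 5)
Your proposal is correct and rests on the same decomposition as the paper's proof: the one-step noise is $\bm{\eta}_i\tau+\widetilde{\bm{\xi}}_i$, and the subtracted term $\mathbf{var}(\widetilde{\bm{F}}_i)\tau^{2}$ cancels the $\cO(\tau^{2})$ excess covariance injected by the random batch, which is exactly the source of the $\cO(\tau)$ defect in the bare RBL. The execution differs: the paper models the batch noise as Gaussian, $\bm{\eta}_i\sim\mathcal{N}(0,\bm{\mathcal{V}})$, lumps everything into a single Gaussian $\bm{\gamma}_i\sim\mathcal{N}\bigl(0,\bm{D}^{2}\tau+(\bm{\mathcal{V}}-\mathbf{var}(\widetilde{\bm{F}}_i))\tau^{2}\bigr)$, and argues that unbiasedness of the variance estimate makes $\bm{\gamma}_i$ differ from $\bm{\xi}_i\sim\mathcal{N}(0,\bm{D}^{2}\tau)$ by $\cO(\tau^{3/2})$ (two Gaussians whose variances differ at order $\tau^{2}$ can be coupled at order $\tau^{3/2}$). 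You instead avoid any Gaussian assumption on $\bm{\eta}_i$: you match the first two moments of $\bm{\zeta}_i=\bm{\eta}_i\tau+\widetilde{\bm{\xi}}_i$ with those of $\bm{\xi}_i$ and bound higher moments, recasting the claim as a weak (moment-matching) local error estimate; this is slightly more general, makes the phrase ``in the sense of statistical expectation'' precise, and restores the single-step fluctuation-dissipation identity explicitly, at the price of the Kloeden--Platen bookkeeping you flag. Two small caveats: since the covariance of $\widetilde{\bm{\xi}}_i$ is built from $\mathbf{var}(\widetilde{\bm{F}}_i^{n})$, it is not unconditionally independent of the batch, so the vanishing of the cross term $\E[\bm{\eta}_i\widetilde{\bm{\xi}}_i^{\mathrm{T}}]$ should be obtained by conditioning on the batch selection (and the exact identity $\E[\bm{\zeta}_i\bm{\zeta}_i^{\mathrm{T}}]=\bm{D}^{2}\tau$ uses $\E[\bm{\eta}_i\bm{\eta}_i^{\mathrm{T}}]=\bm{\mathcal{V}}$ together with unbiasedness of the estimator, or the theorem's ``given the force variance'' hypothesis, as you note); and the fourth-moment mismatch is $\cO(\tau^{3})$ rather than $\cO(\tau^{4})$, because the cross term $\bm{\eta}_i^{\otimes 2}\tau^{2}\otimes\widetilde{\bm{\xi}}_i^{\otimes 2}$ enters at order $\tau^{3}$ --- this still suffices for the claimed $\cO(\tau^{3/2})$ rate.
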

\begin{proof}
   When applying the variance-reduced Brownian motion $\widetilde{\bm{\xi}}_i$ in \eqref{eq::VRRBL} instead of $\bm{\xi}_i$ in the RBL dynamics \eqref{eq::RBL_EM}, one finds that the noise term $\hat{\bm{\xi}}_i$  becomes 
    \begin{equation}
    \label{eq::Taylor}
    \begin{aligned} 
    \hat{\bm{\xi}}_i&=\bm{\eta}_i \tau + \widetilde{\bm{\xi}}_i +\cO(\tau^{3/2}) \\
    &= \bm{\gamma}_i+\cO(\tau^{3/2}). 
    \end{aligned}
    \end{equation}
where $\bm{\gamma}_i\sim \mathcal{N}(0, \bm{D}^2\tau+(\bm{\mathcal{V}}-\textbf{var}(\widetilde{\bm{F}}_i))\tau^2)$.
The difference between $\bm{\gamma}_i$ and $\bm{\xi}_i$ is of $\cO(\tau^{3/2})$, since $\textbf{var}(\widetilde{\bm{F}}_i)$ is an unbiased estimation of $\bm{\mathcal{V}}$. So $\hat{\bm{\xi}}_i=\bm{\xi}_i +\cO(\tau^{3/2})$ shows the higher order truncation error of the proposed dynamics. 
\end{proof}

Along with insights from effective dynamics \cite{inass2023quantifying,guillin2024some}, one also  obtains that the VR-RBL is in consistent with the fluctuation-dissipation theorem since it holds
\begin{equation}
    2k_BT\bm{\Gamma} = \bm{D}^{2} + (\bm{\mathcal{V}}-\textbf{var}(\widetilde{\bm{F}}_i))\tau + \cO(\tau^{3/2}).
\end{equation} 
Since the proposed dynamics have a higher order local truncation error, one can select a smaller batch size $P$ to achieve accurate simulation results.
The primary computational complexity of variance-reduced dynamics centers around the force calculation of particles, while the computation of the variance reduction term entails comparatively lower computing overhead, requiring only the summation process of stored force arrays.
The variance-reduced method is demonstrated to be accurate through numerical results of LJ fluid and colloid-ion systems in section \ref{sec:numerical}, indicating its potential as a general technique for the RBM that sheds new light on precise and effective Langevin dynamics for various applications.

\section{Results}\label{sec:numerical}

We perform simulations to show the effectiveness of the proposed Langevin dynamics method by comparing the results of the three methods: classical Langevin dynamics (CL), the random batch list method without variance reduction (RBL), and the variance-reduced method (VR-RBL). The CL is a brute-force method and is supposed to be accurate as the reference solution.  All the methods discretize the Langevin dynamics by the Euler-Maruyama scheme. We study two benchmark systems:
one is the LJ fluid exhibiting a gas-liquid coexistence state, and the second is a colloid-ion system with varying correlation coefficients \cite{srinivas2004self,beck2004methods,miao2015accelerated}.
The simulations are conducted on a Linux system equipped with an Intel Xeon Scalable Cascade Lake processor running at 2.5 GHz, utilizing 1 single CPU core and 4 GB of memory.

\begin{figure*}[t!]
	\centering
	\includegraphics[width=0.48\textwidth]{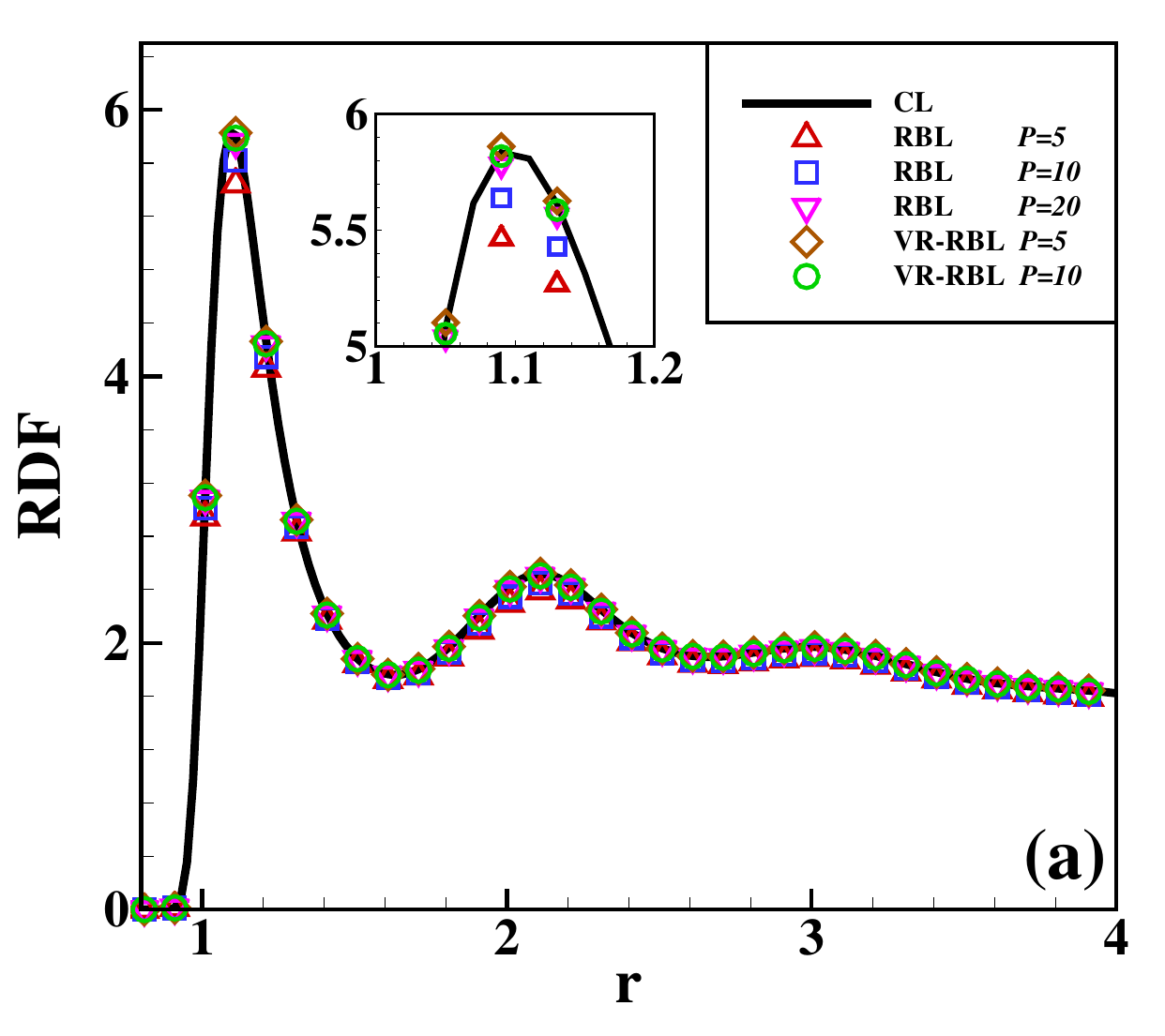}
	\includegraphics[width=0.48\textwidth]{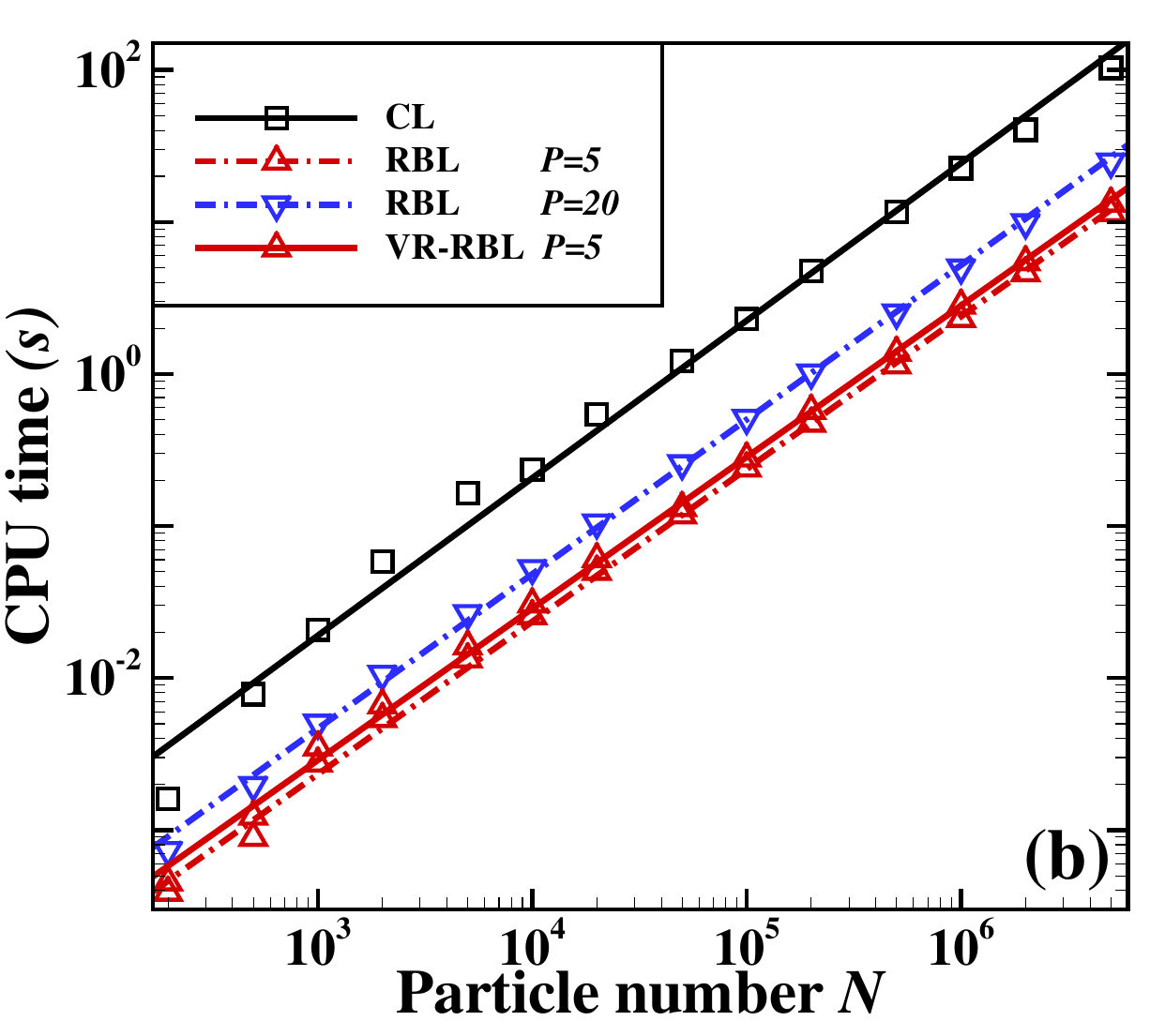}
	\caption{The RDF (a) and CPU time per step (b) calculated by the CL, RBL and VR-RBL methods. The reduced units are applied, and the lines in (b) indicate the linear fitting of all methods.}
	\label{fig:LJRDF}
\end{figure*}

\subsection{Lennard-Jones fluids}

Consider the LJ fluid with interparticle potential given in \eqref{eq:ULJCoul} and parameters $\sigma_{\text{LJ}}=\epsilon_{\text{LJ}}=1$. One sets the particle number $N=500$ in a cubic box such that the average particle density $\rho=0.2$.  
The reduced units with $k_B=1$, $\bm{\Gamma} = \bm{I}$ and time step $\tau=0.01$ are applied. The temperature takes $T=0.9$ at which the system exhibits gas-liquid coexistence state. The cutoff radius takes $r_\mathrm{s}=6.0$ such that the CL remains accurate at the gas-liquid coexistence state, and $r_\mathrm{c}=2.0$. In the calculation, one performs $10^5$ time steps for equilibrium and then $2\times 10^5$ steps for statistics. Fig.~\ref{fig:LJRDF}(a) presents the radial distribution functions (RDF) by the three methods, where one calculates $P=5$, $10$, and $20$ for the RBL, and $P=5$ and $10$ for the VR-RBL. One can see that the variance reduction significantly improves the accuracy for given $P$. The VR-RBL converges to the CL with the small batch size $P=5$. In comparison, the RBL achieves the same accuracy at $P=20$. 
In Fig.~\ref{fig:LJRDF}(b), we present CPU performance for the three methods by using the same setup with fixed density, but the number of particles increases up to $N=5\times 10^6$. The computational cost of all methods increases linearly with the number of particles, but one can observe a significant reduction in the prefactor of the scaling law by imposing random batch techniques.
One can roughly estimate that, for the LJ fluid system of $\rho=0.2$, each particle has $125$ neighboring particles within $r_\mathrm{s}=6.0$ with $11$ neighbors in the core region.
Fig.~\ref{fig:LJRDF}(b) shows that the RBL with $P=20$ is $4.6$ times faster than the CL, and for the VR-RBL the accelerating rate becomes $7.9$. These results are in agreement with the rough estimate of neighboring number and demonstrate the high efficiency of the VR-RBL given the accuracy level.

\subsection{Colloid-ion Systems}

The second example is a colloid-ion system in a spherical simulation volume of radius $R=20$. A colloidal particle with charge $Q_0=-20$ and radius $r_0=5$ is placed at the center, along with $100$ monovalent cations and $80$ monovalent anions with the same radius $r\equiv 0.5$ in the remaining volume. All quantities are provided in reduced units. The interparticle potential is composed of the LJ potential in \eqref{eq:ULJCoul} of parameters $\sigma_{\text{LJ}}=\epsilon_{\text{LJ}}=1$, and  Coulomb interaction $U_{ij}^{\mathrm{Coul}}=\alpha  q_i q_j/r_{ij}$ with $\alpha$ being the coupling strength. 
A spherical Wigner-Seitz (WS) cell model \cite{groot1991ion,tamashiro1998donnan} is employed for the boundary condition in our simulation with the LJ for the ion-wall interaction. 
One sets the temperature $T=1.0$, the unit damping matrix $\bm{\Gamma}=\bm{I}$ and time step $\tau=0.005$ in the CL, RBL and VR-RBL methods.
In the simulation, one performs $5\times 10^{6}$ steps for equilibrium and another $5\times 10^{6}$ steps for statistics.  The coupling strength takes $\alpha=2.0$ and $10.0$, representing a weak and strong coupling system with corresponding cutoff radii setting $r_\mathrm{c}=4.0$ and $5.0$, respectively. Due to the Coulomb interaction, the results of CL are derived from calculating all particle interactions. 

Fig.~\ref{fig:IntQ} presents the results of the charge density of the cation (normalized by its bulk density) and the integrated charge distribution along the radial direction. The integrated charge distribution is calculated through $Q_{\mathrm{int}}(r)=Q_0+\sum\int_{0}^{r} q_i\rho_i(s)\mathrm{d}s$, where $\rho_i(s)$ is calculated by a statistical average. One can observe that RBL and VR-RBL methods are accurate at the weak-coupling case of $\alpha=2.0$ (Fig.~\ref{fig:IntQ}(a,c)). However, the RBL with a small batch size fails to capture the stationary distribution in the $\alpha=10.0$ case, which has strong Coulomb interactions (Fig.~\ref{fig:IntQ}(b,d)). The VR-RBL with $P=5$ has almost the same accuracy as the RBL with $P= 20$, in agreement with the CL result. This result reveals the accuracy and efficiency advantages of the VR-RBL method.

\begin{figure*}[htbp]	
	\centering
	\includegraphics[width=0.48\textwidth]{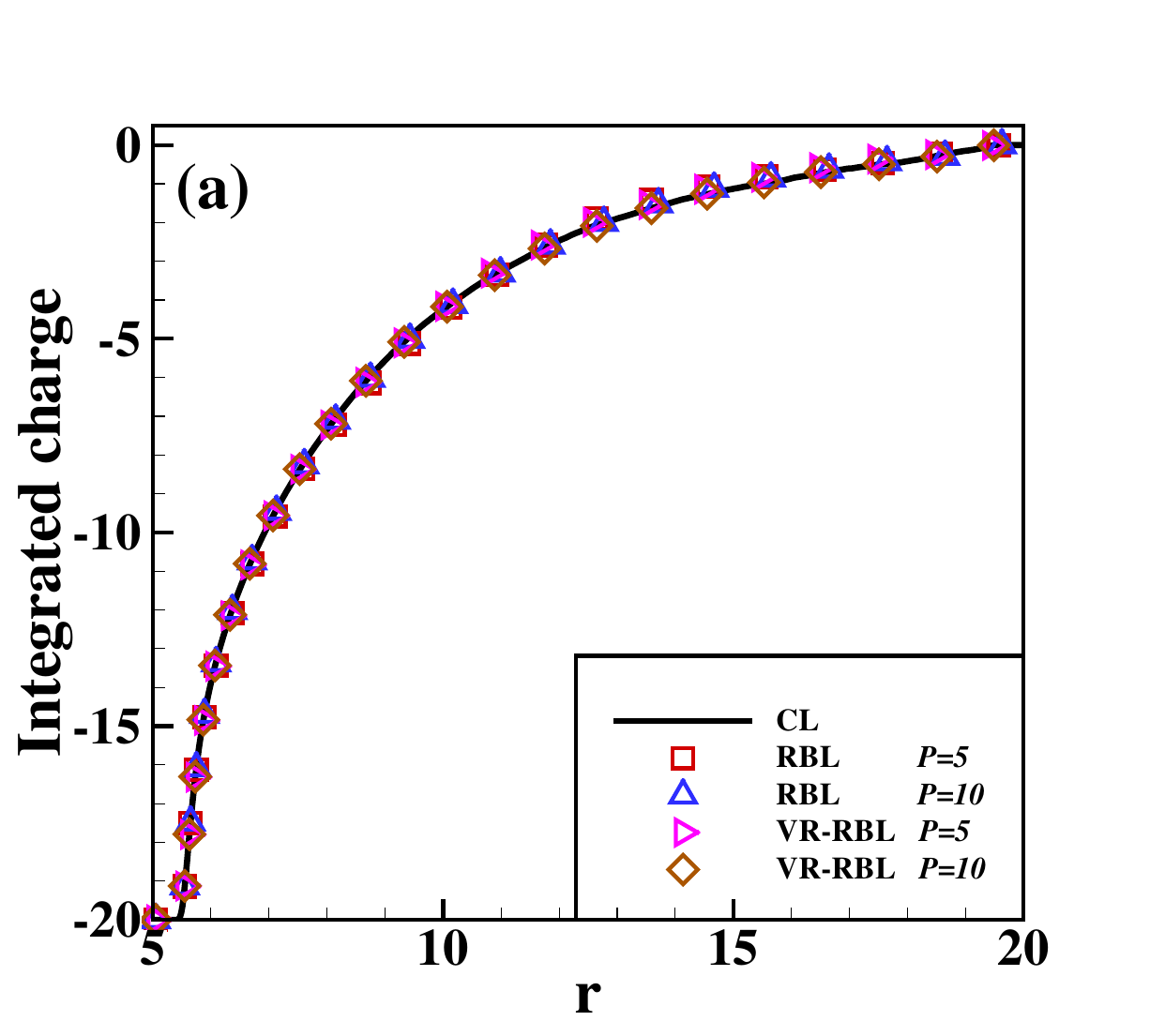}
	\includegraphics[width=0.48\textwidth]{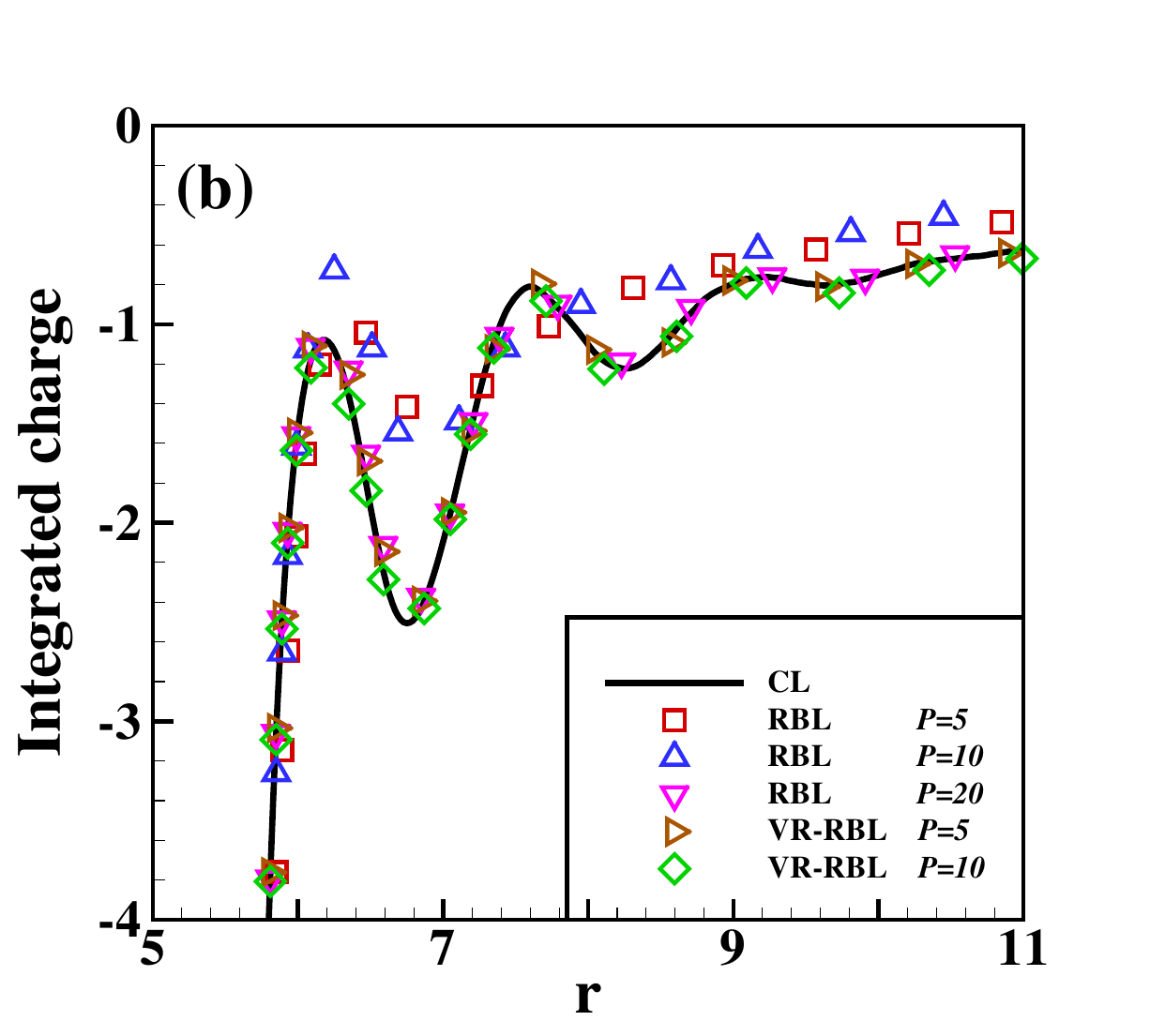}
	\includegraphics[width=0.48\textwidth]{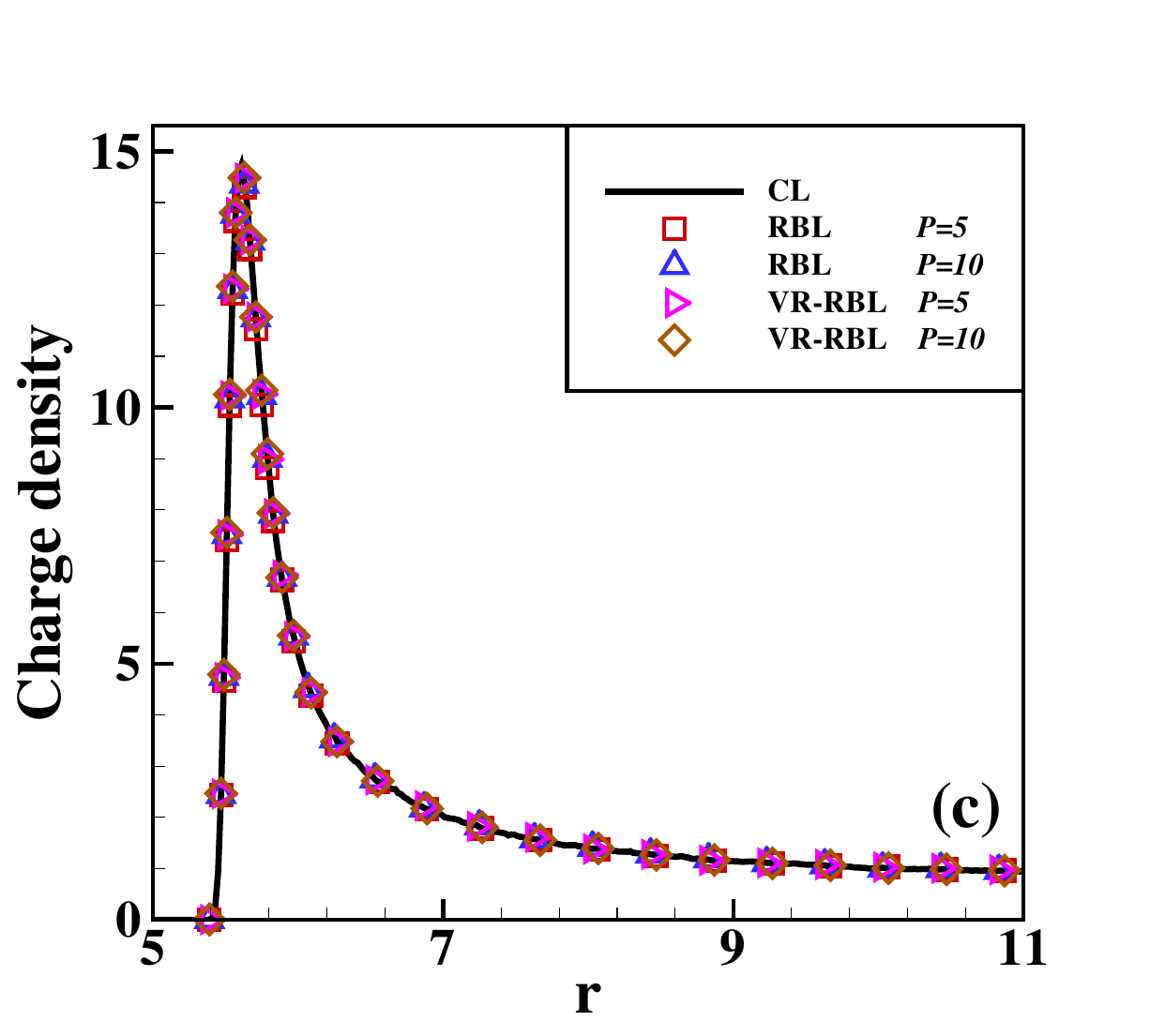}
	\includegraphics[width=0.48\textwidth]{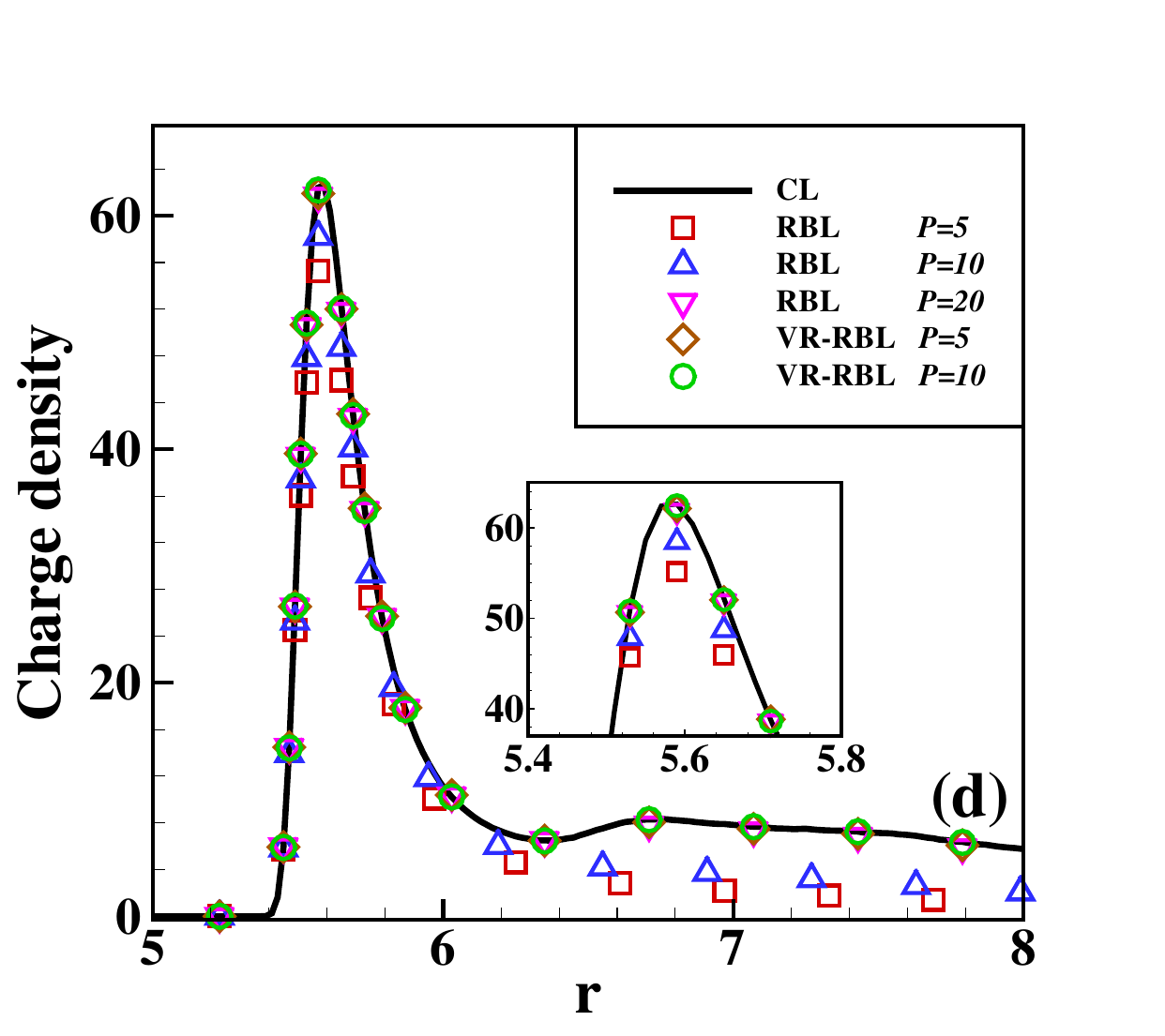}
	\caption{Integrated charge and charge density of cation simulated by the CL, RBL and VR-RBL methods for coupling strength $\alpha=2.0$ [(a,c)] and $\alpha=10.0$ [(b,d)]. All quantities are provided in reduced units.}
	\label{fig:IntQ}
\end{figure*}

It is also interesting to investigate how the time step affects the simulation accuracy. One considers the RBL and VR-RBL for the $\alpha=10.0$ case simulating the same times for the equilibrium and statistics, with the time step ranging from $\tau=0.001$ to $0.02$. Fig.~\ref{fig:RE} presents the relative errors of the kinetic energies and potential energies in the system. These errors are calculated by statistical averages of the energies compared with those of the CL. 
One can clearly observe the linear convergence of the VR-RBL with $1/P$ for both the kinetic and potential energies. Both methods exhibit a convergence rate of $\mathcal{O}(\sqrt{\tau})$; however, the prefactor for the VR-RBL is significantly smaller than that of the RBL. Moreover, the VR-RBL method can accurately capture the property of the system with strong coupling strength even at larger $\tau$, further highlighting its application potential.

\begin{figure*}[htbp]
	\centering
	\includegraphics[width=0.48\textwidth]{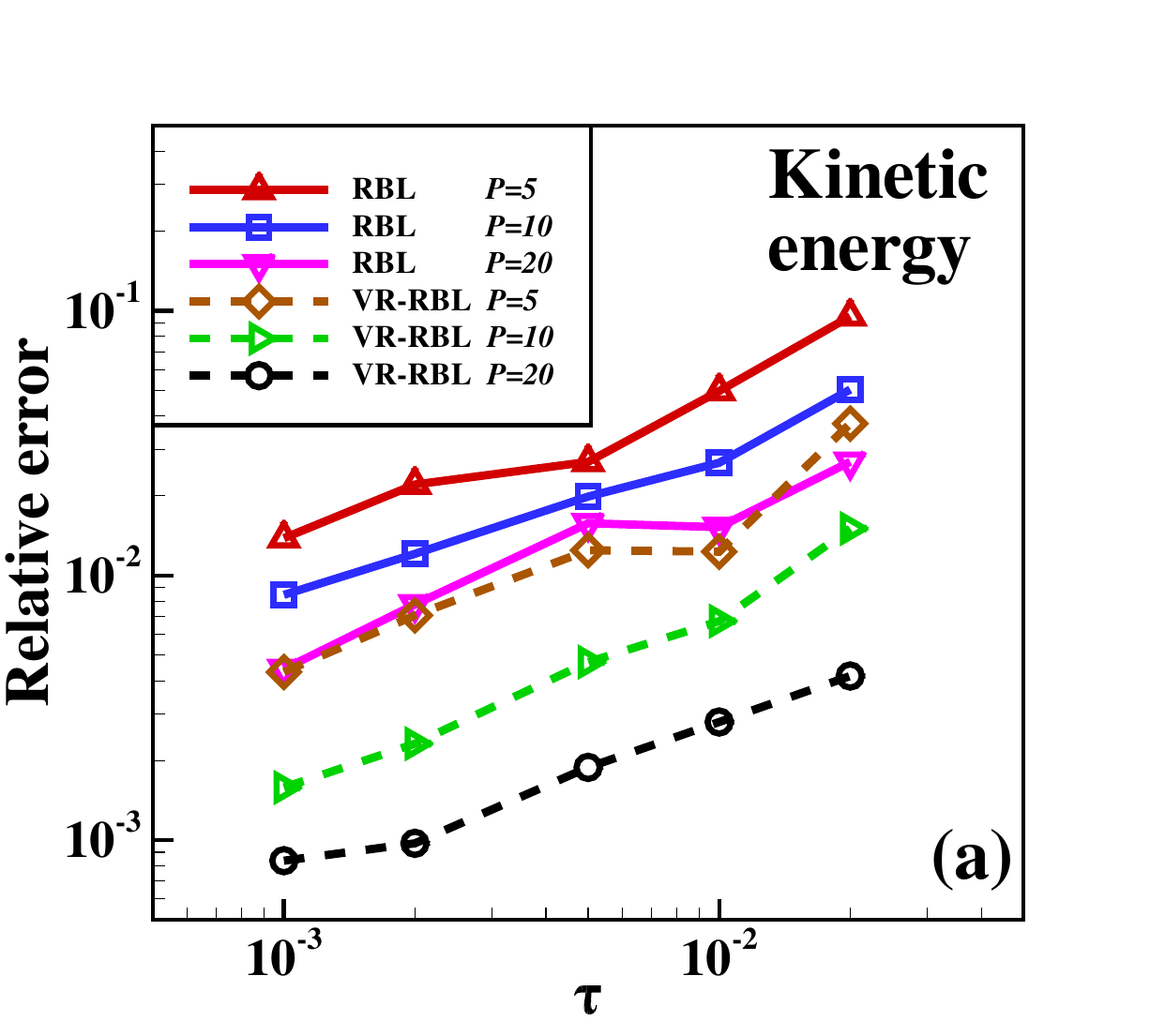}
	\includegraphics[width=0.48\textwidth]{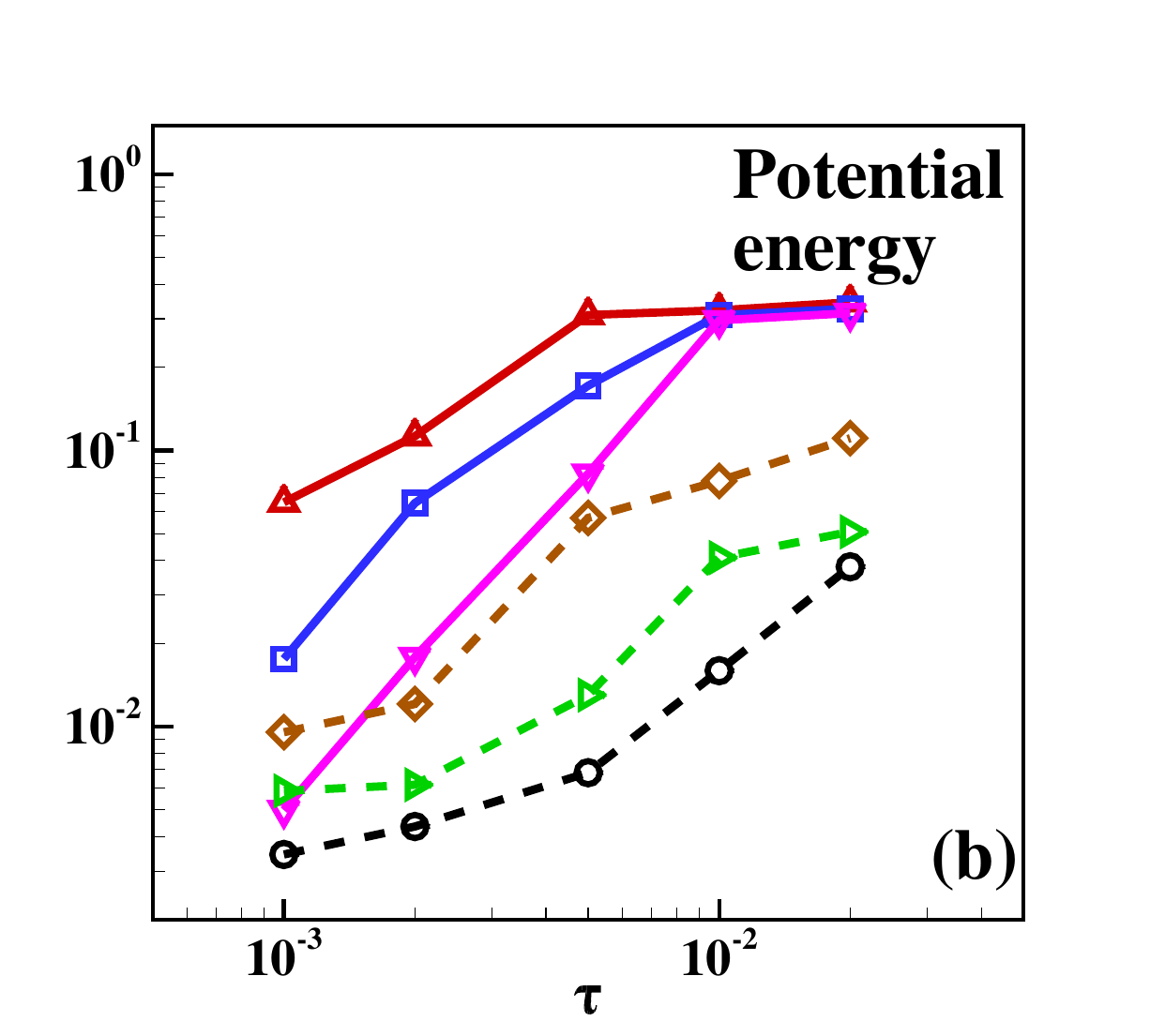}
	\caption{The relative error of kinetic energy(a) and potential energy(b) calculated by the RBL and VR-RBL methods for strong Coulomb correlation system $\alpha=10.0$ with different time steps $\tau$. }
	\label{fig:RE}
\end{figure*}

\section{Conclusions} \label{sec:conc}

In summary, we have proposed a variance-reduced random batch Langevin dynamics for the efficient and accurate simulations of particle systems. The VR-RBL computes interactions using the random batch method while simultaneously estimating the covariance matrix of the forces. The covariance matrix is used to correct the Brownian motion with little additional computational cost to mitigate artificial heating effects and maintain the consistency with the fluctuation-dissipation theorem. Numerical experiments conducted on LJ fluids with gas-liquid coexistence state and colloid-ion systems with different Coulomb strengths demonstrate the accuracy and computational efficiency of the VR-RBL. It is promising for simulations of large-scale heterogeneous systems, allowing for the use of larger time steps to reduce computational time. The rigorous analysis on the convergence of the VR-RBL, along with strategies for further optimizing its accuracy and computational efficiency, will be studied in our subsequent work. It is noteworthy that, the extension of our proposed dynamics to other random batch processes, such as the random batch Ewald and the random batch sum-of-Gaussians methods for long-range interaction systems in molecular dynamics simulations \cite{JinLiXuZhao2020,liang2023random}, is straightforward to implement.



\section*{Acknowledgement}

This work is supported by the National Natural Science Foundation of China (grant No. 12325113) and the Science and Technology Commission of Shanghai Municipality (grant Nos. 23JC1402300 and 21JC1403700).
The authors also acknowledge the support from the HPC center of Shanghai Jiao Tong University.

\section*{Conflict of interest}

The authors declare that they have no conflict of interest.

\section*{Data Availability Statement}

The data that support the findings of this study are available from the corresponding author upon reasonable request.

\bibliographystyle{elsart-num} 
\bibliography{groupbib,ewald}

\end{document}